\newtheorem{thm}{Theorem}[section]
\newtheorem{definition}[thm]{Definition}
\newtheorem{prop}[thm]{Proposition}
\newtheorem{lemma}[thm]{Lemma}
\newtheorem{cor}[thm]{Corollary}
\newtheorem{rem}[thm]{Remark}
\newcommand{\E}{\mathbb{E}}
\newcommand{\R}{\mathbb{R}}
\newcommand{\Z}{\mathbb{Z}}
\newcommand{\vol}{\text{Vol}}
\newcommand{\Var}{\mathrm{Var}}
\newcommand{\diag}{\text{diag}}
\begin{document}



\sloppy 

\title{
{\color{black}
Lattice coding for Rician fading channels from Hadamard rotations}
}

\author{
  \IEEEauthorblockN{Alex Karrila, Niko R. V\"ais\"anen, David Karpuk, \emph{Member, IEEE}, and Camilla Hollanti, \emph{Member, IEEE}}
  \IEEEauthorblockA{Department of Mathematics and Systems Analysis\\
    Aalto University School of Science, Finland\\
    Email: firstname.(initial.)lastname@aalto.fi} 
}

\maketitle

\begin{abstract}
In this paper, we study lattice coding for Rician fading wireless channels. This is motivated in particular by preliminary studies suggesting the Rician fading model for millimeter-wavelength wireless communications. We restrict to lattice codes arising from rotations of $\Z^n$, and to a single-input single-output (SISO) channel. We observe that several lattice design criteria suggest the optimality of Hadamard rotations. For instance, we prove that Hadamard rotations maximize the diamond-packing density among all rotated $\Z^n$ lattices. Finally, we provide simulations to show that Hadamard rotations outperform optimal algebraic rotations and cross-packing lattices in the Rician channel.
\end{abstract}

\begin{keywords}
algebraic rotations, diamond packings, Hadamard rotations, lattice code design, orthogonal lattices, reliability, Rician fading, single-input single-output (SISO) channels
\end{keywords}

\section{Introduction}
\label{sec: intro}

Reliability is a key issue in designing wireless communications, since the channels are vulnerable to distortions. Reliability is typically improved by simultaneous error-correction coding and physical-layer design, with the tradeoff of decoding complexity and information rate.

Orthogonal lattice codes are a highly conventional physical-layer design in all types of wireless channels. Such codes provide for fast vector decoding based on solving a closest-vector problem. In addition, the Gray mapping from bit vectors to lattice vectors guarantees a beneficial conversion of vector decoding errors to bit decoding errors. Thus, the fundamental question in physical-layer reliability is to find the orthogonal lattice, or equivalently, the rotation, that provides a low rate of vector decoding errors. 
Algebraic
rotations are known to provide a solution in Rayleigh fading single-input single-output
(SISO), see \cite{Viterbo, Viterbo-taulukot} for a good overview, as well as in multiple-input multiple-output (MIMO)
channels \cite{Oggier}.


{\color{black}
The case is nevertheless not closed: research on future fifth generation (5G) communication networks calls for studying optimal rotations for the Rician channel.
Namely, instead of the usual 700MHz--2.6GHz, an extension to millimeter waves (e.g. 28GHz) is anticipated \cite{webmagazine1, webmagazine2}. Using millimeter waves provides both advantages and challenges. Using traditional spectrum allows for the transmission of data over a longer distance but at lower capacity, whereas millimeter wave offers greater bandwidth, but over shorter distances. However, the channel models for such new frequency spectra are not yet fully understood. For this reason, it is unclear at the moment what kind of modulation and encoding of the data will be most useful for energy efficiency and signal robustness. Nevertheless, tentative studies \cite{Rappaport} show that the scale channel coefficients are Rician distributed.}


In this paper, we show that several alternative design approaches suggest the optimality of \textit{Hadamard rotations} in Rician SISO channels. We show experimentally that Hadamard rotations outperform the algebraic rotations of \cite{Viterbo} and the cross-packing lattices of \cite{ViterboITW} over the Rician channel.  The Rician channel is indexed by a parameter $K>0$, with $K=0$ being the Rayleigh channel and $K \to \infty$ the Gaussian channel.  While the algebraic lattices offer better performance at $K = 0$, they are outperformed by the Hadamard rotations already at small $K$.  We present results only for $K = 20$ for the sake of compactness, but similar results were obtained for several $K$.

Lattices from Hadamard rotations has previously been proposed for certain fading channels in \cite{HadamardPrecoding1}, and as an alternative to OFDM for optical channels in \cite{HadamardPrecoding2}, to give a few applications. Nevertheless, it seems that their surprisingly good performance in the Rician case has not been noticed before.

\subsubsection*{Organization}
{\color{black}
In Section \ref{sec: preli}, we provide the necessary background on lattices, Hadamard rotations, and the Rician channel. The design approaches based on error probability estimates and sphere packings are presented in Sections \ref{sec: hadamard}--\ref{sec: SP}, and approaches based on diamond packings and diversity estimates in Section \ref{sec: connections}. Simulation results are provided in Section \ref{sec: simulations}.
}



\section{Preliminaries}
\label{sec: preli}

\subsection{Lattices}
\label{subsec: lattices}

A \textit{lattice} is a discrete additive subgroup of $\R^n$. We assume familiarity with the basic concepts related to lattices and lattice codes, and refer the unaccustomed reader to \cite{Viterbo} or \cite{Sloane}.  We point out that we consistently work in the column vector convention. We are going to be interested in the following class of lattices.

\begin{definition}
\label{def:WR}
A full-rank lattice $\Lambda\subset \R^n$ is \textit{well-rounded} (WR) if its minimal vectors span $\R^n$.
\end{definition}
The minimal vectors of a WR lattice $\Lambda$ are not guaranteed to generate $\Lambda$ \cite[Ch. 2]{Nguyen}. WR lattices are of interest here mainly due to their relation to the sphere-packing problem. Namely, all local maxima of the sphere packing density and hence in particular the sphere-packing optimal lattices are well-rounded: in a non-WR lattice, one can shrink the orthogonal complement of the minimal vectors to obtain a lattice with same minimal norm but smaller volume. The following is a partial converse to this statement:

\begin{lemma}
\label{lemma:ZnWR}
Let $\Lambda \subset \R^n$ be a WR full lattice, scaled to unit volume. Then, the sphere packing density of $\Lambda$ is minimized if and only if $\Lambda$ is a rotation of $\Z^n$.
\end{lemma}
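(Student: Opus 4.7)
The plan is to reformulate the sphere packing density in a unit-volume setting as a monotone function of the minimal norm $\mu := \lambda_1(\Lambda)$, so that minimizing the density among unit-volume WR lattices is equivalent to minimizing $\mu$. I will then show $\mu \geq 1$ with equality precisely on rotations of $\Z^n$.

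First, I would use the WR hypothesis to extract $n$ linearly independent minimal vectors $v_1, \ldots, v_n \in \Lambda$ with $\|v_i\| = \mu$ for every $i$. Let $M = [v_1 \ \cdots \ v_n]$ be the matrix with these as columns. These vectors generate a full sublattice $\Lambda' \subseteq \Lambda$, so
\begin{equation*}
    \vol(\Lambda') \;=\; |\det M| \;=\; [\Lambda : \Lambda'] \cdot \vol(\Lambda) \;\geq\; \vol(\Lambda) \;=\; 1.
\end{equation*}
Combining this with Hadamard's inequality yields
\begin{equation*}
    \mu^n \;=\; \prod_{i=1}^n \|v_i\| \;\geq\; |\det M| \;\geq\; 1,
\end{equation*}
so $\mu \geq 1$. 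Since $\Z^n$ itself is WR with minimal norm $1$, this bound is attained, proving the "if" direction and showing that $\mu = 1$ characterizes minimizers.

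For the equality case, I would argue that $\mu = 1$ forces simultaneous equality in both estimates above. Equality in Hadamard's inequality (for vectors of equal positive norm) happens exactly when $v_1, \ldots, v_n$ are pairwise orthogonal, so the columns of $M$ form an orthogonal frame with $\|v_i\| = 1$, i.e.\ $M \in O(n)$. Equality $|\det M| = \vol(\Lambda)$ forces $[\Lambda : \Lambda'] = 1$, i.e.\ $\Lambda = \Lambda' = M \Z^n$, which is a rotation of $\Z^n$. Conversely, any rotation of $\Z^n$ obviously has $\mu = 1$.

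The argument is short and mostly bookkeeping; the only point that requires a moment's care is the equality analysis of Hadamard's inequality under the extra constraint that all the vectors have the same norm, but this is standard and yields orthogonality directly. I would also briefly remark that the sphere-packing density of a unit-volume lattice is $\omega_n (\mu/2)^n$, where $\omega_n$ is the volume of the unit ball in $\R^n$, to make explicit the monotone dependence used at the outset.
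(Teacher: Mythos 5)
Your proof is correct and follows essentially the same route as the paper: both pass to the sublattice generated by $n$ linearly independent minimal vectors and conclude via Hadamard's inequality, with your version simply spelling out the index/volume bookkeeping and the equality analysis that the paper leaves implicit.
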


\begin{proof}
Any $n$ linearly independent minimal vectors of $\Lambda$ generate a WR sublattice $\Lambda'$ whose sphere-packing density is smaller or equal to that of $\Lambda$. Thus, it suffices to prove the claim for the WR lattice $\Lambda'$ generated by its minimal vectors. But the claim is then immediate from Hadamard's inequality.
\end{proof}

\subsection{Hadamard matrices}

\begin{definition}
A \textit{(real) Hadamard matrix} is a square matrix whose all entries are $\pm 1$ and whose column vectors are orthogonal.
\end{definition}

The orthogonality condition can be equivalently cast as $W^T W = n I$. Thus, $W/\sqrt{n}$ is an orthogonal matrix, called a \textit{Hadamard rotation}. Hadamard matrices are conjectured to exist in all dimensions divisible by four, and known to exist in all such dimensions relevant for lattice coding purposes. The Kronecker product of two Hadamard matrices yields a third one. Based on this fact, \textit{Sylvester's construction} is the simplest way to obtain Hadamard matrices in dimensions that are powers of two, defined inductively by
\begin{equation}
W_2 = {\scriptstyle \begin{bmatrix}
1 & \hfill 1 \\
1 & -1
\end{bmatrix} } ,\quad W_{2^{k+1}} = W_2 \otimes W_{2^{k}}.
\end{equation}
We denote Hadamard matrices by $W$ and Hadamard rotations by $U$, often working with $W$ to avoid normalization constants.

\subsection{Fading SISO channels and the Rice distribution}
\label{subsec: channel models}

We consider a single-input-single-output (SISO) wireless channel.  We assume perfect channel state information (CSI) at the receiver and no CSI at the transmitter. Such a channel is modeled by the real channel equation
\begin{equation}
\label{eq: channel equation}
\mathbf{y} = \diag (\mathbf{h}) \mathbf{x} + \mathbf{v},
\end{equation}
where $\mathbf{x} \in \R^n$ and $\mathbf{y} \in \R^n$ are the transmitted and received vectors, respectively, and $\mathbf{h} \in \R_+^n$ and $\mathbf{v} \in \R^n$ are mutually independent random vectors modeling fading and noise, respectively.  We assume Gaussian noise, $\mathbf{v} \sim \mathcal{N}(\mathbf{0}, \sigma^2 I)$, and an interleaved channel where the $h_i$ are i.i.d.

{\color{black}
Our  primary interest lies in Rician distributed $h_i$, modeling a fading with a line of sight and scattering routes. The strength of the line of sight is captured by the \textit{Rician factor} $K \ge 0$ indexing the different Rice distributions defined by the density
\begin{equation}
\rho(h)=2h(1+K)e^{ -K-h^2(1+K)} 
I_0 (2h  \sqrt{ K^2+K } ), 
\end{equation}
where $I_0$ is the zeroth-order Bessel function of first kind. The case $K=0$ is the well-studied Rayleigh fading channel, while in the limit $K \to \infty$, $h_i$ becomes deterministically one and we obtain the additive white Gaussian noise (AWGN) channel. 
}

\section{Lattice Design Criteria in Rician Channels}
\label{sec: hadamard}
\label{subsec: heuristics}

In this section, we provide some motivating computations for the optimality of Hadamard rotations in physical-layer designs.  The design criteria 
rely on the $h_i$ having small variance, and thus being concentrated around their mean.  This is in contrast with the Rayleigh fading channel, in which deep fades (some $h_i\approx 0$) are a major cause of decoding errors.


The two most conventional ways to design physical-layer reliability in fading channels are minimizing the PEP bound \eqref{eq: PEP} or, in the AWGN channel, maximizing the sphere-packing density of the lattice.
We study analogous design approaches in channel models where deep fades are not the primary cause of decoding errors, such as the Rician channel\footnote{This assumption is validated for the Rician channel in Section \ref{sec: SP}.}, by studying the pairwise error probability (PEP) estimate and the sphere-packing at a near-average fade. We find an agreement of the two approaches, both suggesting the optimality of Hadamard rotations. 

\subsection{Pairwise error probability}
 The standard PEP bound states that the probability $P$ of a vector decoding error is bounded by
 {\color{black}
\begin{equation}
\label{eq: PEP}
P \le \frac{1}{2}  \sum_{\mathbf{t} \in \Lambda \setminus \{ \mathbf{0} \} }\mathbb{E} \left\{ \exp \left( - \frac{\Vert \diag(\mathbf{h}) \mathbf{t} \Vert^2  }{8 \sigma^2 } \right) \right\} ,
\end{equation}
where $\Lambda$ is the code lattice, $\sigma^2$ the noise variance, and the expectation is over $\mathbf{h}$. In the deep-fade dominated Rayleigh fading channel, this expectation is analyzed by fixing the distribution of the fading $h_i$ and expanding around $\sigma^2 = 0$. To study a noise dominated channel, let us fix the noise $\sigma^2$ and ``expand around $\Var(h^2_i) = 0$'':
denoting $(\Vert \diag(\mathbf{h}) \mathbf{t} \Vert^2   - \mathbb{E} \{ \Vert \diag(\mathbf{h}) \mathbf{t} \Vert^2 \} )/8 \sigma^2 = \epsilon$, the exponential in \eqref{eq: PEP} becomes
\begin{align}
\nonumber
& \exp  \left( - \frac{\Vert \diag(\mathbf{h}) \mathbf{t} \Vert^2  }{8 \sigma^2 } \right) = \exp \left( - \frac{\mathbb{E} \{ \Vert \diag(\mathbf{h}) \mathbf{t} \Vert^2 \} }{8 \sigma^2 } \right) e^{-\epsilon}\\
&= \exp \left( - \frac{\mathbb{E} \{ \Vert \diag(\mathbf{h}) \mathbf{t} \Vert^2 \} }{8 \sigma^2 } \right) (1 - \epsilon + \epsilon^2/2 - \ldots) 
\end{align}
Neglecting the higher-order terms represented by the ellipses and substituting $\mathbb{E} \{ \Vert \diag(\mathbf{h}) \mathbf{t} \Vert^2 \} = \mathbb{E} \{ h^2 \}\Vert \mathbf{t} \Vert^2$, we approximate the PEP bound as}
{\color{black}
\begin{eqnarray}
&& \frac{1}{2} \sum_{\mathbf{t} \in \Lambda \setminus \{ \mathbf{0} \} } \mathbb{E} \left\{  \exp \left( - \frac{\Vert \diag(\mathbf{h}) \mathbf{t} \Vert^2  }{8 \sigma^2 } \right) \right\} \\
& \approx &
\frac{1}{2} \sum_{\mathbf{t} \in \Lambda \setminus \{ \mathbf{0} \} } e^{ -\frac{\mathbb{E} \{ h^2 \}\Vert \mathbf{t} \Vert^2}{8 \sigma^2} }\left[ 1 + \frac{\Var (h^2) }{2(8 \sigma^2 )^2}\Vert \mathbf{t} \Vert_4^4\right],
\end{eqnarray}
where $\Vert \cdot \Vert_p$ denotes the usual $L^p$ vector norm.
Since $\Var (h^2)$ was assumed small,  the minimal vectors of $\Lambda$ dominate the series above, and  their $L^2$ norm should be maximized and $L^4$ norm minimized in order to minimize the error probability. }In other words, we should first maximize the packing density of $\Lambda$ and then rotate it so that the minimal vectors are parallel to $[\pm 1, \ldots, \pm 1]^T$. If $\Lambda$ is a rotation of $\Z^n$ this condition is satisfied if and only if it is a Hadamard rotation.

\subsection{Sphere packings}
Regarding the fading channel as an instantaneous Gaussian channel, we should maximize the packing density of the randomly faded lattice $\diag (\mathbf{h}) \Lambda$. First, the average norm of a given lattice vector $\mathbf{t} \in \Lambda$ is after fading $\mathbb{E} \{ \Vert \diag(\mathbf{h}) \mathbf{t} \Vert^2 \} = \mathbb{E} \{ h^2 \}\Vert \mathbf{t} \Vert^2$.
This tells us to maximize the packing radius of $\Lambda$, but does not differentiate between rotations of $\Lambda$. Next, the random norms $\Vert \diag( \mathbf{h} )  \mathbf{t}  \Vert^2$, especially the shortest ones, should be stabilized around their expectation $\E [ \Vert \diag(\mathbf{h})  \mathbf{t}  \Vert^2 ] $. Hence, we should minimize the variance
\begin{equation}
\Var ( \Vert \diag(\mathbf{h})  \mathbf{t}  \Vert^2 / \E [ \Vert \diag(\mathbf{h})  \mathbf{t}  \Vert^2 ]  ) 
= \frac{\Var(h_i^2)}{\E[h_i^2]^2} \frac{\Vert \mathbf{t}  \Vert_4^4}{\Vert \mathbf{t}  \Vert_2^4}.
\end{equation}
The conclusions are identical to those reached from the criteria derived from the PEP; specifically, if $\Lambda$ is a rotation of $\Z^n$ and $\mathbf{t}$ a minimal vector of $\Lambda$, then the above quantity will be minimized exactly when it is a Hadamard rotation.

\begin{rem}
More generally, we could expand the class of lattices we are interested in to include non-orthogonal lattices.  In particular, there exist many well-rounded lattices (which necessarily have good sphere packings) all of whose minimal vectors are  parallel to $[ \pm 1, \ldots, \pm 1]^T $.  For example, the body-centered cubic lattice in $\R^3$, generated by the vectors $[ 1, 1, 1]^T $, $[ 1, -1, -1]^T $, and $[ -1, -1, 1]^T $, or its tensor product with any Hadamard lattice.
\end{rem}

\section{Sphere-packing density of Hadamard lattices in Rician fading channels}
\label{sec: SP}

In this section, we provide a probabilistic estimate for the sphere-packing density of a Hadamard rotated unit lattice after fading. We work with the unnormalized Hadamard matrices $W \in \R^{n \times n}$, and denote the lattices of interest by $\Lambda = W \Z^n$ and $\Lambda_h = \diag(\mathbf{h}) W \Z^n$.

Notice that the natural generators of the faded lattice $\Lambda_h$ are of the form $[\pm h_1 \ldots , \pm h_n]^T$. If they are minimal vectors of $\Lambda_h$, then $\Lambda_h$ is well-rounded and, by Lemma \ref{lemma:ZnWR}, $\Lambda_h$ has a good sphere packing. 
In this section, we compute the probability of this event in low dimensions.

\begin{lemma}
\label{lemma: shortest vectors}
For any realization of $\mathbf{h}$, there is a minimal vector $\diag (\mathbf{h}) W \omega$ of the faded lattice $\Lambda_h$, where the integer lattice coordinates $\omega \in \Z^n$ either satisfy $\Vert \omega \Vert^2 < n$, or $\omega$ is a row of the 
 matrix $W$.
\end{lemma}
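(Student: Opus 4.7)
The plan is to compare an arbitrary vector $\diag(\mathbf{h})W\omega$ with a specific candidate $\omega^*$ taken to be the $j^*$-th row of $W$, where $j^*$ is an index attaining $\min_j h_j$. Using the orthogonality relation $WW^T=nI$ (rows of $W$ are mutually orthogonal with squared norm $n$), one computes $W\omega^*=n\,e_{j^*}$, hence $\diag(\mathbf{h})W\omega^*=n h_{j^*} e_{j^*}$ has squared length $n^2 h_{j^*}^2$ while $\Vert \omega^* \Vert^2 = n$. So $\omega^*$ is an explicit candidate falling into the second alternative of the lemma.

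The crux is a matching lower bound. For arbitrary $\omega\in\Z^n$, writing $y=W\omega$ and using $\Vert y\Vert^2 = \omega^T W^T W\omega = n\Vert\omega\Vert^2$, the trivial inequality $\sum_j h_j^2 y_j^2 \ge (\min_j h_j^2) \sum_j y_j^2$ yields
\begin{equation*}
\Vert \diag(\mathbf{h})W\omega\Vert^2 \;=\; \sum_j h_j^2 y_j^2 \;\ge\; h_{j^*}^2 \Vert y\Vert^2 \;=\; n\, h_{j^*}^2\, \Vert\omega\Vert^2.
\end{equation*}

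Let $M$ denote the squared length of a minimal vector of $\Lambda_h$. The candidate $\omega^*$ shows $M \le n^2 h_{j^*}^2$, while the lower bound above gives $n h_{j^*}^2\Vert\omega\Vert^2 \le M$ for any minimizer $\omega$. Under the natural assumption $h_{j^*}>0$ (which holds almost surely under Rician fading and is implicit in $\Lambda_h$ being a full lattice), these two inequalities divide to yield $\Vert\omega\Vert^2\le n$ for every minimizer. The proof then splits: if $\Vert\omega\Vert^2<n$, the original $\omega$ satisfies the first alternative of the lemma; if $\Vert\omega\Vert^2=n$, equality throughout forces $M=n^2 h_{j^*}^2$, so $\omega^*$ itself is a minimizer and realizes the second alternative. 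I do not foresee any serious obstacle: the whole argument is a two-inequality sandwich of $n h_{j^*}^2\Vert\omega\Vert^2$ between the upper bound $n^2 h_{j^*}^2$ (from the explicit row-of-$W$ witness) and the lower bound $M$ (from the minimality of $\omega$), and the only mild technicality is excluding the degenerate event $h_{j^*}=0$.
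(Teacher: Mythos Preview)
Your proposal is correct and follows essentially the same route as the paper: compare an arbitrary $\omega$ against the row $w_{j^*}$ of $W$ indexing $\min_j h_j$, using $W w_{j^*}=n\,e_{j^*}$ for the upper bound $M\le n^2 h_{j^*}^2$ and the trivial estimate $\sum_k h_k^2 z_k^2\ge(\min_k h_k^2)\Vert z\Vert^2=n h_{j^*}^2\Vert\omega\Vert^2$ for the lower bound. Your handling of the boundary case $\Vert\omega\Vert^2=n$ (forcing $\omega^*$ itself to be minimal) and the degenerate event $h_{j^*}=0$ is slightly more explicit than the paper's, but the argument is otherwise identical.
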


\begin{proof}
Let $w_j$ be the $j^{th}$ row of $W$, and $\omega \in \Z^n$ some arbitrary lattice coordinates. We compare the lengths of the lattice $\Lambda_h$ vectors $\diag(\mathbf{h}) W w_j$ and $\diag(\mathbf{h}) W \omega$. First, $ W w_j = n \mathbf{e_j}$, so for some $j$, we have 
\begin{equation}
\Vert \diag (\mathbf{h}) W w_j \Vert^2 = n^2 \min_{1 \le i \le m} h_i^2.
\end{equation}
Next, denote $W \omega = \mathbf{z}$ and note that 
\begin{eqnarray}
\Vert \diag (\mathbf{h}) W \omega \Vert^2 &=& \sum_{k = 1}^n h_k^2 z_k^2 \\
&\ge & \Vert \mathbf{z} \Vert^2 \min_{1 \le k \le n} h_k^2 \\
\label{general}
& = & n \Vert \omega \Vert^2 \min_{1 \le k \le n} h_k^2,
\end{eqnarray}
where the last step used the fact that $W/\sqrt{n}$ is a rotation matrix. Now $\diag (\mathbf{h}) W \omega$ must satisfy $\Vert \omega \Vert^2 < n$ to be shorter than $\diag(\mathbf{h}) W w_j$.
\end{proof}


\begin{cor}
\label{cor: finite dimensions}
Let us denote by $C$ the event that the natural generators $\diag(\mathbf{h})W\mathbf{e}_i$ of a faded Hadamard lattices $\Lambda_h = \diag(\mathbf{h}) W \Z^n$ are minimal vectors. Then,
$\mathbb{P} \{ C \}$ is given by integrating the joint density of $h_ 1^2,...,h_n^2$ over a cone.
\end{cor}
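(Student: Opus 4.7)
The plan is to show that the event $C$ is cut out by finitely many homogeneous linear inequalities in the squared fades $x_k := h_k^2$, so that $C$ corresponds to a convex cone $\mathcal{K} \subset \R_{\geq 0}^n$ and $\mathbb{P}\{C\}$ is the integral of the joint density of $h_1^2,\ldots,h_n^2$ over $\mathcal{K}$.

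First, I would invoke Lemma \ref{lemma: shortest vectors}, which reduces the search for a minimal vector of $\Lambda_h = \diag(\mathbf{h}) W \Z^n$ to lattice coordinates in the finite set
$$\mathcal{S} = \{\omega \in \Z^n \setminus \{\mathbf{0}\} : \Vert \omega \Vert^2 < n\} \cup \{\text{rows of } W\}.$$
The natural generators correspond to $\omega = \pm \mathbf{e}_i \in \mathcal{S}$, so $C$ holds if and only if for every $\omega \in \mathcal{S}$ not of the form $\pm \mathbf{e}_i$,
$$\Vert \diag(\mathbf{h}) W \mathbf{e}_i \Vert^2 \;\leq\; \Vert \diag(\mathbf{h}) W \omega \Vert^2.$$

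The key observation is that both sides of this inequality are linear and homogeneous in the variables $x_k$. Since $W \mathbf{e}_i$ has entries $\pm 1$, the left side equals $\sum_k x_k$. Writing $\mathbf{z} = W \omega$, the right side equals $\sum_k z_k^2 x_k$, so the inequality reads $\sum_k (z_k^2 - 1)\, x_k \geq 0$. For the row-of-$W$ candidates, using $W w_j = n\mathbf{e}_j$ as established in the proof of Lemma \ref{lemma: shortest vectors}, the analogous constraint becomes $n^2 x_j - \sum_k x_k \geq 0$. The finite intersection of all these half-spaces with $\R_{\geq 0}^n$ is a convex cone $\mathcal{K}$, and hence
$$\mathbb{P}\{C\} \;=\; \int_{\mathcal{K}} f_{h_1^2,\ldots,h_n^2}(x_1,\ldots,x_n)\, dx_1 \cdots dx_n,$$
which is the claimed form.

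The main obstacle I foresee is not the structural argument above but purely the bookkeeping: enumerating $\mathcal{S}$ explicitly and pruning the (many) redundant inequalities so that $\mathcal{K}$ has a tractable description. In low dimensions $\mathcal{S}$ is small enough that this enumeration is feasible by hand, and the resulting cone can be written down explicitly, enabling the low-dimensional computations announced in the paragraph preceding the corollary.
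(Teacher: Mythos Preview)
Your proposal is correct and follows essentially the same approach as the paper: invoke Lemma~\ref{lemma: shortest vectors} to reduce to finitely many candidate coordinates $\omega$, then observe that each comparison $\Vert \diag(\mathbf{h}) W \omega \Vert^2 \ge \sum_k h_k^2$ is a homogeneous linear inequality in $h_1^2,\ldots,h_n^2$, so that $C$ is described by a cone. You spell out the linear forms and the convexity more explicitly than the paper does, but the argument is the same.
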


\begin{proof}
For fixed dimension, the previous lemma gives finitely many lattice coordinates $\omega$ that can yield minimal vectors. Then, $\diag (\mathbf{h}) W \mathbf{e}_j$ are minimal vectors if and only if all such $\omega$ satisfy
\begin{eqnarray}
\Vert \diag (\mathbf{h}) W \omega \Vert^2 &\ge & \sum_{k = 1}^n h_k^2 ,
\end{eqnarray}
a linear inequality in $h_1^2,...,h_n^2$.
\end{proof}

\begin{figure}
\includegraphics[width = 0.45\textwidth]{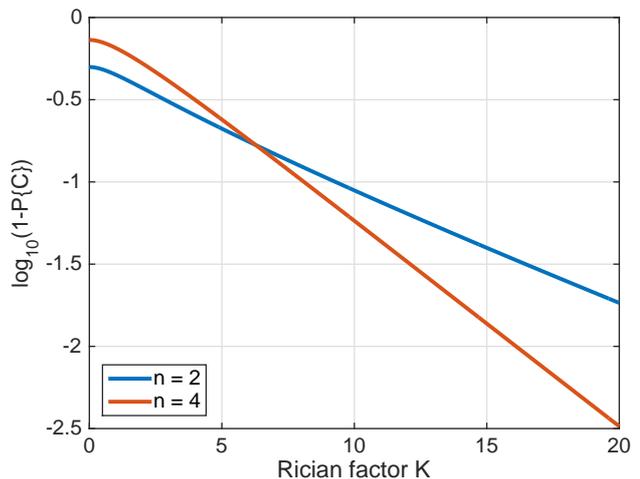}
\caption{The probability $(1- \mathbf{P} \{ C \})$  of Corollary \ref{cor: finite dimensions}, when $h_i$ are taken Rician with parameter $K$,  for the Hadamard matrices in $n = 2,4$ dimensions from Sylvester's construction.
\label{fig: non WR proba}}
\end{figure}

This corollary allows for numerical computations, as illustrated in Figure \ref{fig: non WR proba}. In particular, when $h_i$ are taken Rician with parameter $K$,  the probability  $(1- \mathbf{P} \{ C \})$ of $\Lambda_h$ not being WR decays exponentially in $K$. In four dimensions, the exponential decay is faster than in two. This leads to two conclusions. First, especially in larger dimensions, one can expect Hadamard rotations to perform well in Rician channels where $K$ is sufficiently away from zero. Second, the decoding errors in such a setup mainly occur due to large noise, rather than deep fades.  
This supports the standing assumption of ``noise-dominated errors'' in the design approach suggesting Hadamard rotations in Section \ref{subsec: heuristics}.

\section{Connections to Other Design Criteria}
\label{sec: connections}

Aside from the above criteria derived from the PEP and sphere packing density, there have recently been more subtle design criteria introduced for lattice codes over fading channels, wherein deep fades are not the primary cause of errors.  These include the cross-packing density \cite{ViterboITW} and local diversity \cite{Karpuk-Hollanti}.  In this section, we present results which study how well Hadamard rotations of $\Z^n$ satisfy these design criteria.

\subsection{Diamond packings} Recently, \cite{ViterboITW} designed lattices for Rician channels by maximizing their \textit{cross-packing} density (equivalently, maximizing cross-packing radius), where the crosses consist of axis-aligned line segments. This was motivated by approximating the shape of a contour surfaces of the terms in the Rician PEP estimate \eqref{eq: PEP} by a cross polyomino, i.e.\ an appropriately thickened cross. Analogously, approximating the shape of the contour surfaces by the convex hulls of the $n$-dimensional crosses, also known as $L^1$ norm balls or diamonds, one ends up designing lattices based on their diamond packing density. We find that the Hadamard rotations maximize the diamond packing density of the rotated $\Z^n$ lattices.

\begin{prop}
Let  $R \in \R^n$ be a rotation matrix. Then, the minimal $L^1$ norm of the rotated $\Z^n$ lattice $R \Z^n$ satisfies
\begin{equation}
\min_{\substack{\mathbf{t} \in R\Z^n \\ \mathbf{t} \ne \mathbf{0} }} \Vert \mathbf{t} \Vert_1 \le \sqrt{n},
\end{equation}
with equality if and only if $R$ is a Hadamard rotation.
\end{prop}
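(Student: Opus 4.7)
The plan is to separate the inequality and the equality characterization, then handle each direction of the equality separately.

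For the upper bound, I would simply note that since $R$ is orthogonal, each column $Re_i$ has unit $L^2$ norm, and by Cauchy--Schwarz, $\Vert Re_i \Vert_1 \le \sqrt{n} \Vert Re_i \Vert_2 = \sqrt{n}$. Since $Re_i \in R\Z^n \setminus \{ \mathbf{0} \}$, the minimum is at most $\sqrt{n}$.

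For the forward direction of the equality characterization, suppose the minimum is exactly $\sqrt{n}$. Then each $\Vert Re_i \Vert_1$ must lie in $[\sqrt{n}, \sqrt{n}]$, i.e., we achieve equality in Cauchy--Schwarz for every column. This forces all entries of each $Re_i$ to have the same absolute value, necessarily $1/\sqrt{n}$. Hence $\sqrt{n}\,R$ has all entries $\pm 1$, and since its columns are orthonormal up to the scalar $\sqrt{n}$, it is a Hadamard matrix; thus $R$ is a Hadamard rotation.

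The main obstacle is the reverse direction: assuming $R = W/\sqrt{n}$ with $W$ Hadamard, show $\Vert Rk \Vert_1 \ge \sqrt{n}$ for every nonzero $k \in \Z^n$. Equivalently, $\Vert Wk \Vert_1 \ge n$. A naive bound using $\Vert v \Vert_1 \ge \Vert v \Vert_2$ only gives $\Vert Wk \Vert_1 \ge \sqrt{n}\Vert k\Vert_2$, which is much too weak. The key idea I would use is to invert the relation: since $W^T W = nI$, we have $k = W^T (Wk)/n$. Because each row of $W^T$ has entries $\pm 1$, each coordinate of $W^T(Wk)$ is bounded in absolute value by $\Vert Wk\Vert_1$, so
\begin{equation}
\Vert k \Vert_\infty \le \frac{\Vert W^T(Wk) \Vert_\infty}{n} \le \frac{\Vert Wk \Vert_1}{n}.
\end{equation}
For nonzero integer $k$ we have $\Vert k \Vert_\infty \ge 1$, hence $\Vert Wk\Vert_1 \ge n$, giving $\Vert Rk\Vert_1 \ge \sqrt{n}$. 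Combined with the fact that $t = Re_1$ attains $\Vert t\Vert_1 = \sqrt{n}$ for a Hadamard rotation, this completes the equivalence.
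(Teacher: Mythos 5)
Your proposal is correct and follows essentially the same route as the paper: the upper bound and the ``only if'' direction via the equality case of $\Vert \mathbf{t}\Vert_1 \le \sqrt{n}\,\Vert\mathbf{t}\Vert_2$ applied to the columns of $R$, and the ``if'' direction via H\"older duality against a $\pm 1$ vector. Your inversion $k = W^T(Wk)/n$ is just a repackaging of the paper's step of dotting $\mathbf{t}$ with the sign vector $\pm\sqrt{n}\,\mathbf{u}$ for a column $\mathbf{u}$ of $U$ carrying a nonzero integer coefficient, so the two arguments coincide line for line.
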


\begin{proof}
To prove the inequality, let $\mathbf{e}$ be any elementary basis vector. Take $\mathbf{t} = R \mathbf{e}$ a rotated basis vector. Recall now the relation of $L^2$ and $L^1$ norms on $\R^n$:
\begin{equation}
\Vert \mathbf{t} \Vert_1 \le \sqrt{n} \Vert \mathbf{t} \Vert_2,
\end{equation}
with equality if and only if $\mathbf{t}$ is parallel to $[\pm 1, \ldots, \pm 1 ]^T$.
This implies that 
\begin{equation}
\min_{\substack{\mathbf{t} \in R\Z^n \\ \mathbf{t} \ne \mathbf{0} }} \Vert \mathbf{t} \Vert_1 \le \sqrt{n},
\end{equation}
and the equality is possibly reached only if $R$ is Hadamard.

To prove that the inequality is sharp for Hadamard rotations, notice that $\Vert \mathbf{t} \Vert_1 \ge a$ if and only if $ \mathbf{t} \cdot \mathbf{s} \ge a$ for some sign vector $\mathbf{s} = [\pm 1, \ldots, \pm 1 ]^T$. Now, let $\mathbf{t}$ be any nonzero vector of a Hadamard rotated lattice $U\Z^n$, and let $\mathbf{u}$ be a basis vector of the lattice $U\Z^n$ (i.e., a column of $U$) with a nonzero (integer) coefficient in $\mathbf{t} \in U\Z^n$. Choosing $\mathbf{s} = \pm \sqrt{n} \mathbf{u}$, the orthogonality of the Hadamard basis implies
\begin{equation}
\mathbf{t} \cdot \mathbf{s} \ge \sqrt{n},
\end{equation}
so indeed $\Vert \mathbf{t} \Vert_1 \ge \sqrt{n}$ for all nonzero $\mathbf{t} \in U\Z^n$.
\end{proof}

\subsection{Local diversity} In \cite{Karpuk-Hollanti} the design of reliable lattices in low signal-to-ratio (SNR) Rayleigh fading channels was considered, and the authors deduced that Hadamard rotations are optimal within a certain one-parameter Lie group of rotations. They explained the appearance Hadamard rotations, which contrary to conventional algebraic rotations are not fully diverse, by \textit{local diversity} of Hadamard lattices, \textit{i.e.}, a tradeoff between diversity and length of the lattice vectors. We prove the following sharp local diversity estimate for Hadamard rotations. 

\begin{prop}
\label{lem: local diversity}
Let $U \in \R^{n \times n}$ be a Hadamard rotation and let $\mathbf{t} = U \omega$ be of diversity $k > 0$. Then,
\begin{equation}
\label{local diversity}
k \Vert \mathbf{t} \Vert^2 \ge n.
\end{equation}
\end{prop}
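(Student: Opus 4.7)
The plan is to combine the previous proposition with the Cauchy--Schwarz inequality, applied only on the support of $\mathbf{t}$. Recall that the diversity of $\mathbf{t}$ is simply the number of nonzero coordinates, so if $\mathbf{t}$ has diversity $k$ and $S \subset \{1,\ldots,n\}$ denotes its support, then $|S|=k$ and $\mathbf{t}$ vanishes off $S$.

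First I would invoke the preceding proposition, which gives the sharp bound $\|\mathbf{t}\|_1 \ge \sqrt{n}$ for every nonzero $\mathbf{t}$ in a Hadamard rotated $\Z^n$ lattice. Next I would apply Cauchy--Schwarz to the vector restricted to its support, writing
\begin{equation}
\|\mathbf{t}\|_1 \;=\; \sum_{i \in S} |t_i| \cdot 1 \;\le\; \Bigl(\sum_{i \in S} t_i^2\Bigr)^{1/2} \Bigl(\sum_{i \in S} 1\Bigr)^{1/2} \;=\; \sqrt{k}\,\|\mathbf{t}\|_2.
\end{equation}
Chaining the two inequalities yields $\sqrt{n} \le \|\mathbf{t}\|_1 \le \sqrt{k}\,\|\mathbf{t}\|_2$, and squaring gives exactly $n \le k\|\mathbf{t}\|^2$, which is \eqref{local diversity}.

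There is no real obstacle here; the argument is essentially a two-line corollary of the previous diamond-packing proposition. The only conceptual point worth flagging in the write-up is that the Cauchy--Schwarz step is tight precisely when all nonzero entries of $\mathbf{t}$ have equal magnitude, which matches the equality case of the preceding proposition (minimal vectors proportional to $[\pm 1,\ldots,\pm 1]^T$, giving full diversity $k=n$ and $\|\mathbf{t}\|^2 = 1$); so the estimate is sharp in the regime it is designed to describe.
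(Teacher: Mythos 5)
Your proof is correct and is essentially identical to the paper's: the paper also combines the bound $\Vert \mathbf{t}\Vert_1 \ge \sqrt{n}$ from the preceding proposition with the inequality $\Vert \mathbf{t}\Vert_2 \ge \Vert \mathbf{t}\Vert_1/\sqrt{k}$, which is exactly your Cauchy--Schwarz step on the support.
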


\begin{proof}
Since $\mathbf{t}$ has diversity $k$, we have $\Vert \mathbf{t} \Vert_2 \ge \Vert \mathbf{t} \Vert_1 /  \sqrt{k} $. Substituting the minimal $L^1$ norm $\Vert \mathbf{t} \Vert_1 \ge \sqrt{n}$, we obtain
$
\Vert \mathbf{t} \Vert_2^2 \ge n / k
$
as desired.
\end{proof}

\section{Simulations}
\label{sec: simulations}

We now present simulation results to confirm our previous findings. The key parameters for our simulations are the dimension $n$ of the signal constellation and its order $q$ per dimension (so that the number of constellation points is $q^n$), the parameter $K$ of the Rician distribution, and the volume-to-noise ratio (VNR) which defines the variance of the noise $\sigma_n^2$ by the formula $\mathrm{VNR} = \vol(\Lambda)^{2/n}/(8\sigma_n^2)$. In simulations for $n = 2$ we used $q=8$, and for $n = 4$ we used $q=4$.

\subsection{Setup details}

Let $M$ denote the generator matrix of the simulated code lattice $\Lambda$ and $S$ the signal constellation.  To construct the signal constellation, we start from a finite region of $\mathbb Z^n$ described by $S'' = \{ (x_1, \dots, x_n)\ |\ x_i \in \mathbb Z,\ 0 \le x_i < q\ \forall i \}$. Then, we center such a region by setting $S' = S'' - (q-1)/2$. Finally, our signal constellation $S$ is a image of $S'$ under the generator matrix, $S = MS'$.  

The channel simulations were based on first generating a uniform random constellation vector $\mathbf{x}$, then generating the Rician fading and Gaussian noise vectors of the channel equation \eqref{eq: channel equation} to obtain the received vector $\mathbf{y}$, and finally solving the closest vector problem $\hat { \mathbf{x} } = \arg \min_{\mathbf{t} \in S} ||\mathbf{y} - \mathrm{diag}(\mathbf{h})\mathbf{t}||^2$. The decoding is correct if $\hat { \mathbf{x} } = \mathbf{x}$. 

\subsection{Results}

\subsubsection*{Rotations}

We first compared three different rotations of $\Z^n$ in dimension $n = 4$. These rotations are the identity rotation, the best known algebraic rotation \cite{Viterbo-taulukot}, and the Hadamard rotation from Sylvester's construction. We simulated the performance of these rotations in 
the Rician channel over a large range of $K$ values.

\begin{figure}
\centering
\includegraphics[width=0.45\textwidth]{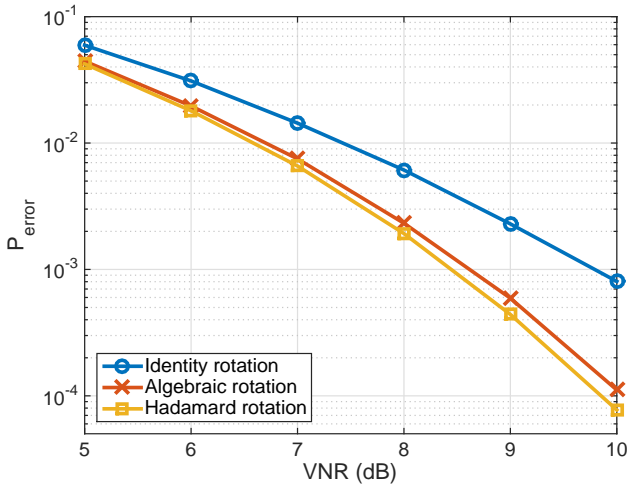}
\caption{Error rates in the Rician fading channel with parameter $K=20$ for lattice codes from the identity rotation, the best-known algebraic rotation, and the Hadamard rotation in $n = 4$ dimensions. \label{fig: hada_vs_alg}}
\end{figure}

In Fig.\ \ref{fig: hada_vs_alg} we plot error rate as a function of VNR, for the Rician parameter $K=20$.  The Hadamard rotations perform slightly better than the algebraic rotations over the whole VNR range, which supports the results of Section \ref{sec: hadamard}.  Similar simulations for other values of $K$ produced comparable results.  

\begin{figure}
\centering
\includegraphics[width=0.45\textwidth]{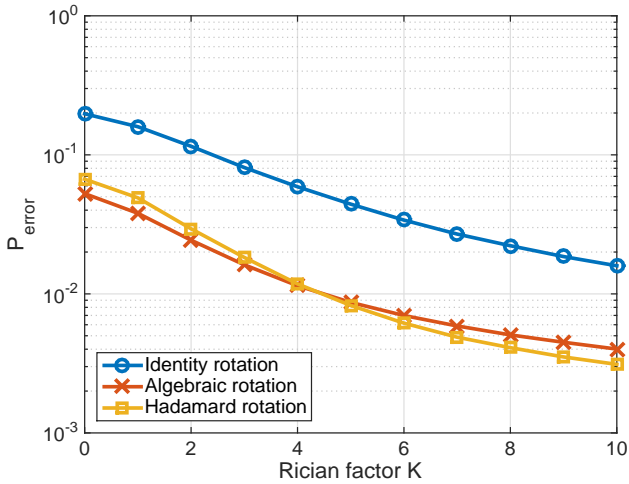}
\caption{Error probabilities for the Rician fading channel in $n = 4$ dimensions as a function of the parameter $K$, with fixed VNR $=8$~dB. \label{fig: k_vary}}
\end{figure}

To see how the performance of the different lattices varies with the Rician parameter $K$, we plot in Figure \ref{fig: k_vary} the error rates as a function of $K$ for fixed VNR.  We see that in dimension $n = 4$ there is a value for $K$, namely $K\approx 4.4$, after which the Hadamard rotation performs better than the other two simulated rotations.  An analogous simulation in dimension $n = 2$ produced similar results, with the critical value of $K$ being $K\approx 7.2$.


\subsubsection*{Cross-packing lattices}

\begin{figure}
\centering

\includegraphics[width=0.45\textwidth]{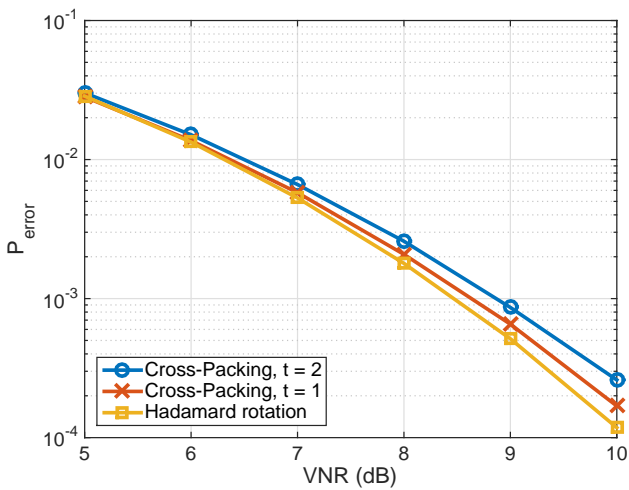}
\caption{Comparison of Hadamard rotation and cross-packing lattices in dimension $n = 2$, for Rician factor $K = 20$. \label{fig: hada_vs_cross}}
\end{figure}

Recently, \cite{ViterboITW} designed a family of cross-packing lattices for Rician channels, indexed by a parameter $t$, of which $t=1$ and $t=2$ performed best. Let us compare the Hadamard rotations against these lattices (normalized to unit volume). 
Note that the cross-packing lattices are not orthogonal, so comparing vector error rates in the codes, as plotted, is not equivalent to comparing bit error rates. 

In Fig.\ \ref{fig: hada_vs_cross} we present simulation results which compare the performance of the Hadamard rotation to the cross-packing lattices in dimension $n = 2$, which show that the Hadamard rotation offers a modest improvement over the cross-packing lattices.


\section{Conclusions}

Motivated by applications to 5G and millimeter wave communications, we studied lattice codebook design for the Rician fading channel.  It was found that Hadamard rotations of $\Z^n$ satisfy the design criteria derived from the corresponding PEP and the sphere-packing density.  Two particularly attractive features of Hadamard lattices are that they often retain good sphere-packing properties after experiencing fading, and that they are maximizers of the diamond packing density among rotations of $\Z^n$. Simulations were provided which demonstrate that Hadamard lattices outperform other lattice constructions in the Rician channel, such as algebraic rotations and cross-packing lattices.




\begin{thebibliography}{10}

%


\bibitem{Viterbo}
F. Oggier and E. Viterbo, ``Algebraic Number Theory and Code Design for Rayleigh Fading Channels,'' \textit{Foundations and Trends in Communications and Information Theory}, vol. 1, no. 3, pp. 333--415, Dec. 2004.

\bibitem{Viterbo-taulukot}
E. Viterbo, ``Table of Best Known Full-diversity Rotations'', www.ecse.monash.edu.au/staff/eviterbo/rotations/rotations.html.

\bibitem{Oggier}
F. Oggier, J.-C. Belfiore, and E. Viterbo, ``Cyclic Division Algebras: A Tool for Space–Time Coding'', \textit{Foundations and Trends in Communications and Information Theory}, vol. 4, no. 1, pp 1--95, Nov 2007.


\bibitem{webmagazine1}
J. Best, ``The race to 5G: Inside the fight for the future of mobile as we know it''. www.techrepublic.com/article/does-the-world-really-need-5g/

\bibitem{webmagazine2}
S. Deng \textit{et al.}, ``Small Wavelengths – Big Potential: Millimeter Wave Propagation Measurements for 5G''. www.microwavejournal.com/articles/23274-small-wavelengths-big-potential-millimeter-wave-propagation-measurements-for-5g?v=preview

\bibitem{Rappaport}
M. K. Samimi, S. Sun, T. S. Rappaport, ``MIMO Channel Modeling and Capacity Analysis for 5G Millimeter-Wave Wireless Systems'', 10th European Conference on Antennas and Propagation (EuCAP 2016), April 2016, Davos, Switzerland.

\bibitem{ViterboITW}
A. Sakzad, A.-L. Trautmann, and E. Viterbo, ``Cross-packing Lattices for the Rician Fading Channel'', 2015 IEEE Information Theory Workshop (ITW), Jerusalem, 2015, pp. 1--5.


\bibitem{HadamardPrecoding1}
X. Yuan, C. Xu, L. Ping and X. Lin, ``Precoder Design for Multiuser MIMO ISI Channels Based on Iterative LMMSE Detection,'' in IEEE Journal of Selected Topics in Signal Processing, vol. 3, no. 6, pp. 1118-1128, Dec. 2009.

\bibitem{HadamardPrecoding2}
M. Noshad and M. Brandt-Pearce, ``Hadamard coded modulation: An alternative to OFDM for wireless optical communications,'' 2014 IEEE Global Communications Conference, Austin, TX, 2014, pp. 2102-2107.

\bibitem{Sloane}
J. H. Conway and N. J. A. Sloane, \textit{Sphere Packings, Lattices and Groups}, 3rd ed. 
New York: Springer-Verlag, 1998.

\bibitem{Nguyen}
P. Q. Nguyen, \textit{The LLL Algorithm: Survey and Applications.} Springer Berlin Heidelberg, 2010, ch. Hermite’s Constant and Lattice Algorithms, pp. 19--69.

%

%
 


\bibitem{Karpuk-Hollanti}
D. A. Karpuk and C. Hollanti, ``Locally Diverse Constellations From the Special Orthogonal Group,'' \textit{IEEE Trans. Wireless Commun.}, vol. 15, no. 6, pp. 4426--4437, May 2016.



\end{thebibliography}
\end{document}